\newcommand{\V}{\mathbb{V}}
\newcommand{\agents}{\mathcal{A}}
\newcommand{\cho}{choose}
\newcommand{\dec}{decide}
\begin{document}
\title{A Sufficient Epistemic Condition for Solving Stabilizing Agreement}

\author{Giorgio Cignarale \inst{1}\orcidID{0000-0002-6779-4023} \and
Stephan Felber\inst{1}\orcidID{0009-0003-6576-1468} \and
Hugo Rincon Galeana\inst{1}\orcidID{0000-0002-8152-1275}}
\authorrunning{G. Cignarale, S. Felber, and H. Rincon Galeana}
\titlerunning{A Sufficient Epistemic Condition for Solving Stabilizing Agreement}
\institute{Embedded Computing Systems,
TU Wien,
Vienna, Austria\\
\email{giorgio.cignarale@tuwien.ac.at}\\
\email{stephan.felber@tuwien.ac.at}\\
\email{hugo.galeana@tuwien.ac.at}\\
}

\maketitle              

\begin{abstract}
In this paper we provide a first-ever epistemic formulation of stabilizing agreement, defined as the non-terminating variant of the well established consensus problem.
In stabilizing agreements, agents are given (possibly different) initial values, with the goal to eventually always decide on the same value. While agents are allowed to change their decisions finitely often, they are required to agree on the same value eventually. We capture these properties in temporal epistemic logic and we use the Runs and Systems framework to formally reason about stabilizing agreement problems. We then epistemically formalize the conditions for solving stabilizing agreement, and identify the knowledge that the agents acquire during any execution to choose a particular value under our system assumptions. This first formalization of a sufficient condition for solving stabilizing agreement sets the stage for a planned necessary and sufficient epistemic characterization of stabilizing agreement. 

\keywords{Distributed Systems, Epistemic Logic, Runs and Systems, Temporal Logic, Stabilizing Agreement}

\end{abstract}
\section{Introduction}
\label{sec:Intro}

We introduce the stabilizing agreement problem with a twist to a famous puzzle: the two kingdoms of Aldinga and Beluga have decided to combine their powers and merge their kingdoms, but haven't agreed under which name they will continue to write history. The kingdoms reside on two hills separated by a valley, where unfortunately a hungry dragon lives.

Fortunately, the dragon is getting slow and only ever eats at most one of the two messengers between the two kingdoms (but sometimes catches neither). Neither Aldinga nor Beluga are in a rush to agree on either of the names, but they both know that they should eventually start to use the same name as their respective heroic deeds are only impressive enough to leave a mark in the history books if they appear under the same name forever.

The situation of Aldinga and Beluga is derived from the two generals problem~\cite{LSP82} and not surprisingly the two kingdoms can also not solve consensus. Interestingly enough though,~\cite{sh2024stabilizing} showed that stabilizing agreement is indeed possible in this communication setting, highlighting that stabilizing agreement is a proper weakening of the \emph{terminating} consensus problem. Further, Aldinga and Beluga can also solve the non-byzantine version of the firing-rebels problem ~\cite{fire}, which itself is a non-simultaneous firing version of the firing squad ~\cite{charronbost_et_al:LIPIcs.STACS.2018.20} should they choose to fire on the dragon. This hints to a hierarchy of the three problems, where consensus as the strongest problem implies stabilizing agreement and the firing rebels, stabilizing agreement implies firing rebels, and firing rebels as the weakest of all three.
This conjecture is backed by the fact that the sufficient knowledge for stabilizing agreement shown in this paper implies the sufficient knowledge for firing rebels~\cite{fire}, and from stabilizing agreement being a non-terminating relaxation of consensus. 

Formally, we consider a \emph{distributed system} consisting of multiple processes, also known as \emph{agents}, that communicate and coordinate actions in order to solve some given problem, such as consensus~\cite{DS:concept_design,Lynch_DA}. Temporal epistemic logic proved to be an extremely successful framework for distributed system \cite{bookof4}, providing a number of crucial results such as the \emph{Knowledge of Precondition Principle} formulated by Moses~\cite{Mos15TARK} and the \emph{happened-before relation} formulated by Lamport~\cite{Lam78}.
Another important insight provided by epistemic logic is that the common knowledge operator (everybody knows that everybody knows etc.) is closely related to perfect coordination, which is required in consensus \cite{bookof4}. Maybe unsurprisingly, even the fault-tolerant version of the Firing Rebels problem requires an eventual common epistemic attitude, as shown in \cite{fire}.

Stabilizing problems relax the standard consensus in one crucial way: albeit the goal in both is to have all agents agreeing on a common value, stabilizing agreement does not require agents to terminate. This means that an agent can \textit{choose} a value in a certain state of the system, supposing that it is indeed the correct one, only to discover later on that another value is more likely to be the common value. This deceptively simple twist keeps the agents ever-guessing, as agents in general have no guarantee of knowing when they have obtained all values.

While stabilizing agreement has recently received increasing attention  \cite{charronbost2019minmax,sh2024stabilizing}, its epistemic formalization is still missing.
The goal of this paper is to fill this gap, in providing a sufficient condition for solving stabilizing agreement.
\vspace{-10pt}

\section{Epistemic Framework}
\label{sec:R&S}
\vspace{-10pt}
For our epistemic logic framework, we will use a minimalist framework based on the \textit{runs and systems} framework \cite{bookof4}, which integrates the possible world Kripke semantics of epistemic logic \cite{van2015handbook} with a temporal-component.

We fix a finite set $\agents=\{1,\ldots,n\}$ of agents with perfect recall, and a finite set of values $\V=\{1,\ldots,k\}$.
Each agent is assigned an initial value, formally represented by the atom $init_a(v)$ for $a \in \mathcal{A}$ and $v \in \V$. Each agent knows its own initial value but does not know the local values of other agents.
Each agent performs actions (according to its protocol), e.g., send messages. One of the actions that any agent can do is to \emph{choose} a value among the initial values learned during a run, formally represented by the atom $\cho_a(v)$ with $a \in \agents$ and $v \in \V$.
We assume that there exists a linearly ordered global time set $\mathbb{T}$, which we will assume for convenience to be $\mathbb{N}$ \footnote{It should be noted that other time sets such as $\mathbb{R}$ may be considered, but will not be considered within the scope of this paper.}. 

We also consider a set of global states $\mathcal{G}$, along with a collection of equivalence accessibility relations  $\mathcal{R}_a \subseteq \mathcal{G} \times \mathcal{G}$, indexed by each agent $a \in \agents$, reflecting global state indistinguishability relative to each agent. This accessibility relation implicitly reflects the agents' local views. Concretely, for $s_1, s_2 \in \mathcal{G}$ we say that $(s_1, s_2) \in \mathcal{R}_{a}$ iff agent $a$ cannot distinguish between world $s_1$ and world $s_2$.
A run $\sigma$ is a sequence of global states $\sigma(t)_{t \in \mathbb{N}} =
\sigma(0),\ldots,\sigma(t), \ldots$ of the system, and we denote the set of runs by $\Sigma$. Within the scope of this paper, we will assume that for any possible world $w \in \mathcal{G}$, there is a run $\sigma \in \Sigma$ and a time $t \in \mathbb{T}$ such that $w = \sigma(t)$.
\vspace{-10pt}
\section{Epistemic Modeling of Stabilizing Agreement}\label{sec:stabtask}

\vspace{-10pt}
 We consider a language $\mathcal{L}$, defined by the following grammar: \[\varphi ::=  \ p  \ | \ \neg\varphi \ | \  (\varphi \wedge \varphi) \ | \ K_a \varphi \ | \ \Diamond \varphi.\]
 
 We assume that $p \in Prop$ is the set of propositional atoms, in particular $Prop$ includes all the atoms previously discussed in \Cref{sec:R&S}, $a\in \agents$; derived Boolean connectives are defined in the usual way and we use the following abbreviations: $\Box \varphi := \neg \Diamond \neg \varphi$ (always) and $E_{\agents} \varphi:= \bigwedge_{a\in\agents}K_a \varphi$ (mutual knowledge).

We assume a valuation function $\pi: Prop \rightarrow 2^{\mathcal{G}}$ that determines the semantics. In particular, for a run $\sigma \in \Sigma$, time $t \in \mathbb{N}$, atomic proposition $p \in Prop$, agent $a \in \agents$ , and formula $\varphi$, $(\mathcal{I},\sigma,t) \models p$ iff $\sigma(t) \in \pi(p)$, and $(\mathcal{I},\sigma,t) \models K_a \varphi$ iff $(\mathcal{I},\sigma',t') \models \varphi$ for any $\sigma' \in \Sigma$ and $t' \in \mathbb{N}$ such that $(\sigma(t) , \sigma'(t')) \in \mathcal{R}_a$; and $(\mathcal{I},\sigma,t) \models \Diamond \varphi$ iff $(\mathcal{I},\sigma,t') \models \varphi $ for some $t' \geq t$.
A formula $\varphi$ is valid in $\mathcal{I}$ during a run $\sigma \in \Sigma$, written $(\mathcal{I}, \sigma) \models \varphi$ iff $(\mathcal{I}, \sigma,t) \models \varphi$ for any $t \in \mathbb{T}$.
A formula $\varphi$ is valid in $\mathcal{I}$, written $\mathcal{I} \models \varphi$, iff $(\mathcal{I},\sigma,t) \models \varphi$ for all $\sigma \in \Sigma$ and $t \in \mathbb{N}$. 
We assume input values to be stable during a run, meaning that for any run $\sigma \in \Sigma$, $\mathcal{I}, \sigma \models init_a(v)$ or $\mathcal{I}, \sigma \models \neg init_a(v)$.

Since we focus on agreement tasks, we are interested in knowledge formulas that are relative only to input values: 

\begin{definition}[Primitive value formula]
    We denote the set of \emph{primitive value formulas}, $\Phi:= \{ \bigwedge_{a \in X } init_a(v_a) \; \vert \; \emptyset \neq X \subseteq \agents, \; v_a \in \V \}$
\end{definition}

Since $\mathcal{A}$ and $\V$ are finite, also $\Phi$ is finite.

We write $K_a \varphi$ for some $\varphi \in \Phi$ to denote that agent $a$ knows (for example, by receiving messages containing the values) the initial values represented in $\varphi$. Usually we only want to reason about the 'largest' set of initial values agent $a$ is aware of: 

\begin{definition}[Current primitive knowledge]\label{def:phibar} \footnote{It should be noted that \cref{def:phibar}, \cref{def:phibarlim}, \cref{def:phistar}, and \cref{def:phistarlim} are not unique formulas, but rather an equivalence class of propositional formulas that may differ via permutation.}
    We denote by $\overline{\varphi}_{(a,\sigma,t)}$, the current primitive knowledge of $a$ at a run $\sigma$ and a time $t$ if, for $\overline{\varphi}_{(a,\sigma,t)}\in \Phi$, $(\mathcal{I}, \sigma, t) \models K_a \overline{\varphi}_{(a,\sigma,t)}$, and for any formula $\varphi' \in \Phi$ such that $(\mathcal{I}, \sigma, t) \models K_a \varphi'$, then $\overline{\varphi}_{(a,\sigma,t)}\rightarrow \varphi'$.
\end{definition}

We also define the limit of a current primitive knowledge:
\begin{definition}[Primitive knowledge limit]\label{def:phibarlim}
    Let $\sigma \in \Sigma$ be a run, $a \in \mathcal{A}$, and $t \in \mathbb{T}$. We define the limit primitive knowledge of $a$ at $\sigma$, denoted by $\overline{\varphi}_{(a,\sigma)} \in \Phi$ as the strongest primitive value formula that $a$ will know at $\sigma$. Formally, this means that $(\mathcal{I},\sigma) \models \Diamond K_a \overline{\varphi}_{(a,\sigma)}$ and for any $\varphi' \in \Phi$ such that $(\mathcal{I}, \sigma, t) \models K_a \varphi' $, it follows that $\overline{\varphi}_{(a,\sigma)} \rightarrow \varphi'$.
\end{definition}

Assuming that a process has achieved knowledge of mutual primitive knowledge, then we can also define the strongest mutually known primitive knowledge relative to a process:
\begin{definition}[Current mutually-known primitive knowledge] \label{def:phistar}
    Let $a \in \mathcal{A}, \sigma \in \Sigma, t\in \mathbb{T}$ such that $(\mathcal{I}, \sigma, t) \models \bigvee_{\varphi \in \Phi} K_a E_{\mathcal{A}} \varphi$. We denote the \emph{current mutually-known primitive knowledge} of $a$ at a time $t$ as $\varphi^*_{(a,\sigma,t)} \in \Phi$ such that $\mathcal{I}, \sigma, t \models K_a E_{\mathcal{A}}\varphi^*_{(a,\sigma,t)}$ and for any $\varphi' \in \Phi$ such that $\mathcal{I}, \sigma, t \models K_a E_{\mathcal{A}} \varphi'$ it follows that $\varphi^*_{(a,\sigma,t)} \rightarrow \varphi'$.
\end{definition}

\begin{definition}[Mutually-known primitive knowledge limit] \label{def:phistarlim}
    Let $a \in \mathcal{A}$ be an agent, and $\sigma$ a run such that $\mathcal{I}, \sigma \models \Diamond( \bigvee_{\varphi \in \Phi} K_a E_{\mathcal{A}} \varphi)$. We denote by $\varphi^*_{(a,\sigma)}$ the formula $\varphi^*_{(a,\sigma)} \in \Phi$ such that $\mathcal{I}, \sigma \models \Diamond K_a E_{\mathcal{A}} \varphi^*_{(a,\sigma)}$, and for any $\varphi' \in \Phi$ such that $\mathcal{I}, \sigma \models \Diamond K_a E_{\mathcal{A}} \varphi'$, then $\varphi^*_{(a,\sigma)} \rightarrow \varphi'$
\end{definition}

We can also provide an explicit construction for $\overline{\varphi}_{(a,\sigma,t)}$ in the following way: Let $\V_{(a,\sigma,t)} = \{ init_b(v) \; \vert \; b \in \mathcal{A}, v \in \V, (\mathcal{I}, \sigma, t) \models K_a init_b(v)\}$. It is easy to verify that $\overline{\varphi}_{(a,\sigma,t)} \leftrightarrow \bigwedge_{\varphi \in \V_{(a,\sigma,t)}} \varphi$. In the same way, we define $\V_{(a,\sigma)} = \{ init_b(v) \; \vert \; b \in \mathcal{A},  v \in \V, \exists t \in \mathbb{T}; \; (\mathcal{I}, \sigma, t) \models K_a init_b(v)\}$. We can also verify that $\overline{\varphi}_{(a,\sigma)} \leftrightarrow \bigwedge_{\varphi \in \V_{(a,\sigma)}} \varphi$

For any $\sigma \in \Sigma$, and $t\in \mathbb{T}$ where $(\mathcal{I}, \sigma, t) \models \bigvee_{\varphi \in \Phi} K_a E_{\mathcal{A}} \varphi $ can also provide an explicit construction for $\varphi^*_{(a,\sigma,t)}$ in the following way:  let $\V^*_{(a,\sigma,t)} = \{ init_b (v) \; \vert \; b\in \mathcal{A}, v \in \V; (\mathcal{I}, \sigma, t) \models K_a E_{\mathcal{A}} init_b(v) \}$. We can verify that $\varphi^*_{(a,\sigma,t)} \leftrightarrow \bigwedge_{\varphi \in \V^*_{(a,\sigma,t)}} \varphi$. Again, for any $\sigma \in \Sigma$ such that $\mathcal{I}, \sigma \models \Diamond \bigvee_{\varphi \in \Phi} K_a E_{\mathcal{A}}, \varphi $ we can provide an explicit construction of $\varphi^*_{(a,\sigma)}$ in the following way: let $\V^*_{(a,\sigma)} = \{ init_b (v) \; \vert \; b\in \mathcal{A}, v \in \V; \mathcal{I}, \sigma\models \Diamond K_a E_{\mathcal{A}} init_b(v) \}$. We can verify that $\varphi^*_{(a,\sigma)} \leftrightarrow \bigwedge_{\varphi \in \V^*_{(a,\sigma)}} \varphi$.

The formulas in the following definition shape the distributed system upon which we build our epistemic solution to stabilizing tasks. For convenience, we define $\dec_a(v)$ as \emph{forever choosing} the value v, i.e.,
$\dec_a(v) := \Box \cho_a(v)$.

\begin{definition}[Stable choice system]
\label{def:stabchoice}
We say that a model is a stable choice system if the following properties hold:

\begin{itemize}
    \item \textit{Stable Choice:} Each agent eventually decides some value\footnote{The decide action refers only to $\Box \cho_a(v)$, and should not be confused with the decision action in terminating tasks.}
    \begin{equation}
        \mathcal{I} \models \bigwedge_{a \in \agents} \bigvee_{v \in \V} \Diamond \dec_a(v)
    \end{equation}
    \item \textit{Choice Determinism:} each agent is allowed to choose at most a single value at any point in the run.
    \begin{equation} \label{eq:ChoiceDeterminism}
        \mathcal{I} \models \bigwedge_{a \in \agents, v \in \V} \bigl( \cho_a(v) \rightarrow
            \bigwedge_{w \in \V \setminus \{ v \}} \neg \cho_a(w) \bigr)
    \end{equation}

\item \textit{Local state introspection:} agents know their own initial value,
    \begin{equation}
        \mathcal{I} \models \bigwedge_{a \in \agents} [init_a(v) \rightarrow
            K_a init_a(v) ]
    \end{equation}
    \textit{and agents have exactly one input value.}
    \begin{equation}
        \mathcal{I} \models \bigwedge_{a \in \agents} (\bigvee_{v\in\V} init_a(v) \bigwedge_{w \in \V\setminus\{ v \}} \neg init_a(w)).
    \end{equation}
\item \textit{Perfect input recall:} agents can only increase their knowledge about input values.
    \begin{equation}\label{eq:perfectrecall}
        \mathcal{I} \models
            \bigwedge_{a \in \agents, \varphi \in \Phi} K_{a} \varphi \rightarrow \Box K_a \varphi.
    \end{equation}
\end{itemize}

\end{definition}

We proceed to define stable choice systems that are consistent with stabilizing agreement.

\begin{definition}[Stabilizing agreement]
\label{def:Ef_ST}
    We say that a \emph{stable choice system} is consistent with stabilizing agreement if \emph{(Agreement)} and \emph{(Validity)} hold.
\begin{itemize}
    \item (Agreement) : There is a value such that every agent decides on that value
    \begin{equation}
         \mathcal{I} \models \bigvee_{v \in \mathcal{V}} \bigwedge_{a \in \agents} \Diamond \dec_a(v)
    \end{equation}
    \item (Validity) : An agent can only choose a known initial value of some agent
     \begin{equation}
          \mathcal{I} \models \bigwedge_{v \in \V} \bigl( \cho_a(v) \rightarrow   K_a \bigvee_{b \in \agents} init_b (v) \bigr)
     \end{equation}
\end{itemize}
    
\end{definition}

\section{An Epistemic Solution for Stabilizing Agreement}
\label{sec:results}
In this section we introduce additional conditions that enable a system to be consistent with stabilizing agreement. Our first condition, the \emph{Second Depth Broadcaster condition}, is a knowledge liveness condition that guarantees that at any run, some agent is able to lift its primitive knowledge to eventual second depth mutual knowledge. The second condition, the \emph{Largest Mutually-Known Choice condition} imposes a restriction on the choose action of an agent, and it implicitly implies a precondition for choosing a value. The \emph{Largest Mutually-Known Choice condition} can be stated informally as follows: as soon as an agent $a$ achieves some knowledge of mutually-known primitive knowledge, then it must choose a value according to a deterministic and a-priori commonly known rule from the pool of mutually-known values.

\begin{definition}[The Second Depth Broadcaster Condition]\label{def:secdepthbr}
    We say that a model is consistent with \emph{The Second Depth Broadcaster Condition} if there is an agent $a \in \mathcal{A}$ that is infinitely often capable of lifting its primitive input knowledge to eventual second depth mutual knowledge. More precisely:

    \[ \mathcal{I} \models \bigvee_{a \in \agents} \bigwedge_{\varphi \in \Phi}
            \bigl( K_a \varphi \rightarrow \Diamond E_{\agents} E_{\agents} \varphi \bigr) \]    

\end{definition}

In addition to this new condition, we also assume a pre-determined \emph{value selection strategy}. Intuitively, this strategy is an a-priori commonly known method for selecting consistently a value from a pool of values. A popular selection for a choice strategy is usually selecting either the minimum or the maximum value from the set. 

\begin{definition}[Value Selection Strategy] \label{def:valselstrat}
Let $f: \Phi \rightarrow \V$ be a function. We say that $f$ is a \emph{value selection strategy} iff for any $\varphi \in \Phi$:
\[\mathcal{I} \models \varphi \rightarrow (\bigvee_{a \in \mathcal{A}}init_a(f(\varphi))). \]
\end{definition}

The epistemic condition for choosing a value is captured by the following definition:
\begin{definition}[The Largest Mutually-Known Choice Condition]
\label{def:Largest_E_CC}
    We say that a system is consistent with the largest mutually known choice condition if the following holds for some value selection strategy $f$:
    \[ \mathcal{I} \models \bigwedge_{a\in\agents}(\bigvee_{\varphi \in \Phi} K_a E_{\mathcal{A}} \varphi) \rightarrow \cho_a(f(\varphi^*_{(a,\sigma,t)}))\]
\end{definition}
Intuitively, for the sake of the stabilization agreement, it is safe for an agent $a$ to choose among the largest set of input values that $a$ knows to be mutual knowledge among all agents.

The following theorem says that an interpreted system satisfying the system assumptions above solves the stabilizing agreement task:
\begin{theorem}

Let $\mathcal{I}$ be an interpreted system consistent with \emph{choice determinism}, \emph{local state introspection}, \emph{perfect input recall}, \emph{second depth broadcaster} and \emph{largest mutually-known condition}, then $\mathcal{I}$ is consistent with stabilizing agreement.
\end{theorem}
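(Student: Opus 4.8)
The plan is to establish the three properties defining a system consistent with stabilizing agreement, namely (Stable Choice), (Agreement), and (Validity); note that (Stable Choice) is not among the hypotheses and must itself be derived, whereas (Choice Determinism), (Local state introspection) and (Perfect input recall) are assumed outright. I would first dispatch (Validity), which is purely local. By the Largest Mutually-Known Choice Condition, which governs the agents' choices, whenever $\cho_a(v)$ holds we have $v=f(\varphi^*_{(a,\sigma,t)})$ for the fixed value selection strategy $f$, and in particular the antecedent $K_aE_{\agents}\varphi^*_{(a,\sigma,t)}$ holds. From $K_aE_{\agents}\varphi^*_{(a,\sigma,t)}$, the truth axiom $E_{\agents}\psi\rightarrow K_a\psi$ and the $S5$ collapse $K_aK_a\psi\leftrightarrow K_a\psi$ give $K_a\varphi^*_{(a,\sigma,t)}$; since $f$ is a value selection strategy, $\varphi^*_{(a,\sigma,t)}\rightarrow\bigvee_{b\in\agents}init_b(v)$ is valid and hence known to $a$, so $K_a\bigvee_{b\in\agents}init_b(v)$, which is exactly (Validity).

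The heart of the argument is to produce a single value on which all agents eventually settle, delivering (Agreement) and (Stable Choice) at once. Fixing a run $\sigma$, I would use that the Second Depth Broadcaster Condition holds at every time together with the finiteness of $\agents$ to extract, by pigeonhole over the time line, an agent $a^{*}$ that broadcasts at infinitely many times. By \eqref{eq:perfectrecall} the primitive knowledge of $a^{*}$ is monotone, and since $\Phi$ is finite it stabilizes: there is $t_0$ with $K_{a^{*}}\overline{\varphi}_{(a^{*},\sigma)}$ for all $t\ge t_0$. Instantiating the broadcaster implication with $\varphi=\overline{\varphi}_{(a^{*},\sigma)}$ at a broadcasting time $t_1\ge t_0$ yields $\Diamond E_{\agents}E_{\agents}\overline{\varphi}_{(a^{*},\sigma)}$; the second depth is precisely what is needed, since at the witnessing time $t_2$ it gives $K_bE_{\agents}\overline{\varphi}_{(a^{*},\sigma)}$ for every agent $b$, i.e. everyone knows that $\overline{\varphi}_{(a^{*},\sigma)}$ is mutual knowledge, which is exactly the antecedent of the choice condition.

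Next I would prove the key ``sandwiching'' claim: from $t_2$ on, $\varphi^*_{(b,\sigma,t)}\leftrightarrow\overline{\varphi}_{(a^{*},\sigma)}$ for every agent $b$. The lower bound $\varphi^*_{(b,\sigma,t)}\rightarrow\overline{\varphi}_{(a^{*},\sigma)}$ follows from $K_bE_{\agents}\overline{\varphi}_{(a^{*},\sigma)}$ and the maximality in \Cref{def:phistar}. The upper bound exploits that mutual knowledge is known to everyone, including $a^{*}$: any $\psi$ with $K_bE_{\agents}\psi$ forces $K_{a^{*}}\psi$, and since $\overline{\varphi}_{(a^{*},\sigma)}$ is the strongest primitive formula $a^{*}$ ever knows, $\overline{\varphi}_{(a^{*},\sigma)}\rightarrow\psi$, whence $\overline{\varphi}_{(a^{*},\sigma)}\rightarrow\varphi^*_{(b,\sigma,t)}$. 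Consequently every agent evaluates $f$ on the same formula and, by the Largest Mutually-Known Choice Condition, performs $\cho_b(v^{\circ})$ with $v^{\circ}:=f(\overline{\varphi}_{(a^{*},\sigma)})$ at every such time, (Choice Determinism) ruling out a competing simultaneous choice. This gives $\Diamond\dec_b(v^{\circ})$ for every $b$ with a common $v^{\circ}$, i.e. both (Agreement) and (Stable Choice).

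The step I expect to be the main obstacle is the persistence needed to pass from ``$E_{\agents}E_{\agents}\overline{\varphi}_{(a^{*},\sigma)}$ holds at $t_2$'' to ``$K_bE_{\agents}\overline{\varphi}_{(a^{*},\sigma)}$ holds at all $t\ge t_2$,'' as required for $\Box\cho_b(v^{\circ})$; here \eqref{eq:perfectrecall} only propagates formulas of $\Phi$, not the depth-two epistemic formula. I would bridge this in two moves: first, mutual knowledge of a primitive formula is stable, since $E_{\agents}\overline{\varphi}_{(a^{*},\sigma)}=\bigwedge_{c\in\agents}K_c\overline{\varphi}_{(a^{*},\sigma)}$ persists conjunct-by-conjunct via \eqref{eq:perfectrecall}; second, for a stable $\theta$ the perfect-recall semantics of the agents forces $K_b\theta\rightarrow\Box K_b\theta$, which applied to $\theta=E_{\agents}\overline{\varphi}_{(a^{*},\sigma)}$ yields the required persistence (and, incidentally, shows the $\varphi^*_{(b,\sigma,t)}$ are monotone so that the limit is pinned by the sandwiching). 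Making this semantic persistence precise, and checking that $f$ is well defined on the permutation-equivalence classes of \Cref{def:phibar}, is where the genuine work lies; the remaining manipulations are routine $S5$ and temporal reasoning.
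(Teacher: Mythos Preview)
Your proposal is correct and follows essentially the same line as the paper's proof: pick a broadcaster, push its limit primitive knowledge $\overline{\varphi}_{(a^*,\sigma)}$ to depth-two mutual knowledge, sandwich $\varphi^*_{(b,\sigma,t)}$ against $\overline{\varphi}_{(a^*,\sigma)}$ from both sides, and conclude via the Largest Mutually-Known Choice Condition. You are in fact more careful than the paper in several places it glosses over---the pigeonhole extraction of a single $a^*$ from the pointwise disjunction, the explicit derivation of Stable Choice, the detailed Validity argument via the value selection strategy, and especially the persistence of $K_bE_{\agents}\overline{\varphi}_{(a^*,\sigma)}$, which the paper handles with a one-line appeal to knowledge stability while you correctly isolate it as the step requiring the ambient perfect-recall assumption on agents rather than just \eqref{eq:perfectrecall}.
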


\begin{proof}
Let $\sigma \in \Sigma$, since $\mathcal{I} \models \bigvee_{a \in \agents} \bigwedge_{\varphi \in \Phi}
            \bigl( K_a \varphi \rightarrow \Diamond E_{\agents} E_{\agents} \varphi \bigr)$, then there exists an agent $a \in \mathcal{A}$ such that $\mathcal{I}, \sigma \models \bigwedge_{\varphi \in \Phi}
            \bigl( K_a \varphi \rightarrow \Diamond E_{\agents} E_{\agents} \varphi \bigr)$. In particular consider $\overline{\varphi}_{(a,\sigma)} \in \Phi$. From the definition of $\overline{\varphi}_{(a,\sigma)}$, it follows that $\mathcal{I}, \sigma \models \Diamond K_a \overline{\varphi}_{(a,\sigma)}$. 

            Therefore, there exists a time $t_1 \in \mathbb{T}$ such that $(\mathcal{I}, \sigma, t_1) \models K_a \overline{\varphi}_{(a,\sigma)}$. It follows that $\mathcal{I}, \sigma, t_1 \models \Diamond E_{\mathcal{A}} E_{\mathcal{A}} \overline{\varphi}_{(a,\sigma)}$. 

            In particular, there exists a time $t_2 \geq t_1 \in \mathbb{T}$ such that $(\mathcal{I}, \sigma, t_2) \models E_{\mathcal{A}} E_{\mathcal{A}} \overline{\varphi}_{(a,\sigma)}$.

            Now consider $t'$ any arbitrary time $t' \geq t_2$. Since we assume knowledge to be stable, $(\mathcal{I}, \sigma, t') \models E_{\mathcal{A}} E_{\mathcal{A}} \overline{\varphi}_{(a,\sigma)}$. It follows that $\varphi^*_{(b,\sigma,t')}$ is well defined (modulo permutation) for any agent $b \in \mathcal{A}$. 

            Consider an arbitrary agent $b \in \mathcal{A}$. From the definition of $\varphi^*_{(b,\sigma,t')}$, it follows that $(\mathcal{I}, \sigma, t') \models K_b E_{\mathcal{A}} \varphi^*_{(b,\sigma,t')}$. In particular $(\mathcal{I}, \sigma, t') \models K_b K_a \varphi^*_{(b,\sigma,t')}$. From knowledge factivity, $(\mathcal{I}, \sigma, t') \models K_a \varphi^*_{(b,\sigma,t')}$. From the definition of $\overline{\varphi}_{(a,\sigma)}$, it follows that $\overline{\varphi}_{(a,\sigma)} \rightarrow \varphi^*_{(b,\sigma,t')}$.

            On the other hand, since $(\mathcal{I}, \sigma, t') \models E_{\mathcal{A}} E_{\mathcal{A}} \overline{\varphi}_{(a,\sigma)}$, it follows that $(\mathcal{I}, \sigma, t') \models K_b E_{\mathcal{A}}\overline{\varphi}_{(a,\sigma)}$. It follows from the definition of $\varphi^*_{(b,\sigma,t')}$ that $\varphi^*_{(b,\sigma,t')} \rightarrow \overline{\varphi}_{(a,\sigma)}$. Therefore, for any arbitrary time $t' \geq t_2$ and any arbitrary agent $b \in \mathcal{A}$, $\overline{\varphi}_{(a,\sigma)} \leftrightarrow \varphi^*_{(b,\sigma,t')}$.

            Note that $(\mathcal{I}, \sigma, t') \models \bigvee_{\varphi \in \Phi} K_b E_{\mathcal{A}} \varphi$, for any $b\in \mathcal{A}$ since $(\mathcal{I}, \sigma, t') \models K_b E_{\mathcal{A}}\overline{\varphi}_{(a,\sigma)}$ and $\overline{\varphi}_{(a,\sigma)} \in \Phi$. Since the largest mutually-known choice condition holds by assumption, then $(\mathcal{I}, \sigma, t') \models \cho_b(f(\varphi^*_{b,\sigma,t'}))$. Therefore $(\mathcal{I}, \sigma, t') \models \cho_b(f (\overline{\varphi}_{(a,\sigma)}))$. It follows that $(\mathcal{I}, \sigma, t_2) \models \Box \cho_b(f (\overline{\varphi}_{(a,\sigma)}))$. Therefore $\mathcal{I}, \sigma \models \Diamond \Box \cho_b(f (\overline{\varphi}_{(a,\sigma)}))$ for any arbitrary $b \in \mathcal{A}$. It follows that $\mathcal{I}, \sigma \models \bigwedge_{b \in \mathcal{A}} \cho_b(f (\overline{\varphi}_{(a,\sigma)}))$.
            
            Finally, note that $\overline{\varphi}_{(a,\sigma)}$ is a fixed element of $\Phi$ per each run $\sigma \in \Sigma$, and $f$ is a value selection strategy; therefore $f(\overline{\varphi}_{(a,\sigma)}) \in \mathcal{V}$. This implies that $\mathcal{I} \models \bigvee_{v \in \mathcal{V}} \bigwedge_{a \in \agents} \Diamond \dec_a(v)$.

            Validity follows from the fact that processes only choose values from their current primitive knowledge.
\end{proof}

\section{Conclusion and Further Work}

We presented a first ever sufficient epistemic characterization for stabilizing agreement and prove it correct. In \Cref{def:secdepthbr} we presented a sufficient condition on the knowledge to solve stabilizing agreement, shedding new light on the epistemic understanding of non-terminating tasks. 
After identifying sufficient conditions, the necessary conditions for stabilizing agreement are an obvious next direction.
Further, our result that eventual second-order mutual knowledge suffices hints proximity to the notion of super-experts in gossip protocols~\cite{vdgrsuergossip}, and connections to higher-order eventual group knowledge, such as the eventual common hope for fault-tolerant Firing Rebels with relay~\cite{fire} can be explored.

\bibliographystyle{splncs04}
\bibliography{references}
\end{document}